\newcommand{\T}{\bm{T}}
\newcommand{\Rl}{\mathbb{R}}
\newcommand{\Pb}{\mathbb{P}}
\newcommand{\Na}{\mathbb{N}}
\newtheorem{defi}{Definition}[section]
\newtheorem{teor}{Theorem}[section]
\title{{
\textbf{Insights of the Intersection of Phase-Type Distributions and Positive Systems
%Exploring Intersections: Phase-Type Distribution Theory in Positive System Analysis
%Using  Phase-Type Distributions to analyze Positive Systems
}}}
\author{
\large{Luz Judith Rodríguez Esparza, Fernando Baltazar Larios} \\
%Depto. de Matem\'aticas y F\'isica \\
%Investigadora por México-CONAHCYT, \\
%Universidad Aut\'onoma de Aguascalientes
}
\date{}
\begin{document}
\renewcommand{\tablename}{Table}
\renewcommand\qedsymbol{$\blacksquare$}

\maketitle
\begin{abstract}
  In this paper, we consider the relationship between phase-type distributions and positive systems through practical examples. Phase-type distributions, commonly used in modelling dynamic systems, represent the temporal evolution of a set of variables based on their phase. On the other hand, positive systems, prevalent in a wide range of disciplines, are those where the involved variables maintain non-negative values over time. Through some examples, we demonstrate how phase-type distributions can be useful in describing and analyzing positive systems, providing a  perspective on their dynamic behavior. Our main objective is to establish clear connections between these seemingly different concepts, highlighting their relevance and utility in various fields of study. The findings presented here contribute to a better understanding of the interaction between phase-type distribution theory and positive system theory, opening new opportunities for future research in this exciting interdisciplinary field.
\end{abstract}

\textbf{Keywords:} Phase-type distributions; positive systems; control systems.

%\tableofcontents
\section*{Introduction}
Positive representations appear naturally in modelling of a number of physical, economical, and ecological systems (\cite{luenberger1979dynamic}, \cite{farina2000positive}).
Since the beginning of this century, there has been an effort to unite the communities of control and probability theory, since both two fields had theory own development starting with a common root which was the Perron-Frobenius theory.
In particular, the reference \cite{commault2003phase} provided a bridge for these communities using phase-type (PH) distributions. %which
%between the PH-representation problem and the positive realization theory

PH distributions  correspond to the random hitting time of an absorbing Markov chain
  (\cite{neuts1975probability,neuts1981matrix}). 
  They are used for modelling various random times, in particular, those which appear in manufacturing systems as processing times, times to failure, repair times, etc.
The Markovian nature of these distributions allows the use of very efficient matrix based computer methods for performance evaluation. 

PH distributions are a class of probability distributions that are widely used in stochastic modelling to represent random phenomena with continuous —CPH— or discrete —DPH— time dynamics (\cite{fackrell2009modelling,buchholz2014input}). These distributions are characterized by their PH representation (\cite{o1989non,horvath2009canonical}), which describes the evolution of a stochastic process in terms of transitions between different phases.
%As expected, there are continuous phase-type (CPH) and discrete phase-type (DPH) distributions. %, depending on the time.... 

 %In this paper we give a system theory oriented introduction to phase-type distributions. We concentrate mainly on the representation problem which consists of finding a Markov chain associated with some phase-type distribution. This is a realization problem in the sense of system theory with a lot of links with the classical linear system theory but also with a number of constraints which make the problem harder but more interesting. Indeed this problem has strong connections with the positive realization problem in control theory. The paper recalls known results, gives some new results, and points out the main remaining problems.

%As \cite{kim2015relation} pointed out, there is a relation between the positive realization and phase-type representation in continuous time and discrete time. A lot of open problems related to positive systems still remain and should be addressed in research. The communities of control and probability theory can work together on solving the remaining same open problems.

 %Numerous unresolved issues persist within the realm of positive systems and warrant further investigation in research. % Collaboration between the communities of control and probability theory holds promise for tackling these persistent challenges.

As highlighted the reference \cite{kim2015relation}, a connection exists between positive realizations and a PH representation in both continuous and discrete time.
In PH distributions, the random variable represents the time until a certain event occurs, such as the time until a system fails or the time until a customer arrives at a service facility. The distribution of this random variable is specified by a matrix exponential function, which describes the probability density function (pdf) of the distribution.

Studying or converting a positive system into a PH distribution can be beneficial for several reasons. Representing a positive system using a PH distribution can simplify its complexity and facilitate its analysis \cite{bladt2017matrix}; PH distributions are known to have favorable mathematical properties that can make the analysis of dynamic systems more manageable \cite{rodriguez2010maximum}; they can provide a compact and effective representation of the temporal evolution of a positive system,  allowing modelling and predicting the system's behavior over time, which is crucial for planning and decision-making in various applications.
Also, representing the system in terms of PH distributions can offer an intuitive interpretation of the dynamic behavior of the positive system. This can help researchers and practitioners to better understand how the system variables evolve and relate to each other. %By converting a positive system into a PH distribution, 
 One also can leverage the wide range of tools for analyzing probabilistic distributions available, % This can allow 
 evaluating the system's performance in terms of metrics such as reliability, availability, and efficiency.

While the study of converting positive systems into PH distributions has been addressed in a few articles (\cite{kim2015relation},\cite{commault2003phase}), the existing literature remains limited in coverage,  leaving significant room for further exploration and investigation. It is evident that the topic has not received widespread attention among researchers. Therefore, while there are some initial insights available, there is still a considerable opportunity for more comprehensive studies and deeper understanding of this relationship.

Given the increasing relevance of PH distributions %in modelling various dynamic systems, 
leveraging existing estimation algorithms becomes imperative. Utilizing pre-programmed algorithms available in the statistical package \texttt{R} for estimating PH distributions offers several advantages (\cite{rivas2022phasetyper,bladt1matrixdist}). Firstly, \texttt{R} provides a rich ecosystem of statistical packages tailored for various estimation tasks, including maximum likelihood estimation (MLE) and Bayesian inference. By tapping into these resources, researchers can expedite the modelling process and focus more on data analysis and interpretation. Moreover, \texttt{R}'s open-source nature fosters transparency and reproducibility, ensuring that the estimation procedures can be easily scrutinized and replicated by the research community. Additionally, the availability of comprehensive documentation and user communities further facilitates the adoption and implementation of these algorithms. Therefore, incorporating \texttt{R}-based estimation algorithms for PH distributions enhances the efficiency, reliability, and accessibility of modelling efforts, ultimately advancing the understanding and application of PH distributions in diverse fields of study.

%The main objective of this paper is 
Therefore, we aim to provide practical and simulated examples using \texttt{R} to establish clear connections between PH distributions and positive systems, highlighting their relevance and utility in various fields of study.
%Continuous phase-type distributions find applications in various fields, including reliability engineering, queueing theory, and performance modelling. They provide a flexible framework for modelling complex systems with continuous-time dynamics and have efficient computational properties, making them valuable tools in stochastic modelling and analysis.

This article is organized as follows. 
In Section \ref{ssec:back}, we provide a background of positive linear systems and PH distributions. The relation between these topics is given in Section
\ref{ssec:relacion}. Numerical examples are presented in Section
\ref{ssec:numerical}, for both continuous and discrete time. Final comments are presented in Section
\ref{ssec:conclu}.

%**************************************%
%**************************************%
%**************************************%
\section{Background}\label{ssec:back}

In this section, we provide background information on three key components essential to our study: Positive Linear Systems, CPH, and DPH distributions. Positive Linear Systems represent a class of dynamic systems characterized by non-negative states and parameters, commonly encountered in various engineering and scientific applications. Understanding their behavior and properties lays the foundation for our investigation into the relationship between system dynamics and PH distributions. Subsequently, CPH and DPH distributions serve as probabilistic models capable of describing the time evolution of random variables in continuous and discrete time, respectively. By leveraging these distributions, we aim to elucidate how the inherent stochastic nature of system dynamics can be effectively captured and analyzed within the framework of PH distributions. Through a comprehensive exploration of these concepts, we seek to establish a solid theoretical groundwork for our subsequent analyses and discussions.

%**************************************%
\subsection{Positive Linear systems}
The realization problem for positive system has been considered in \cite{farina2000positive,kim2013constructive,kaczorek2014realization}. 
Let us consider single-input, single-output linear time-invariant systems of the form
%General multivariate case
\begin{equation}
\left\{
\begin{aligned}
   x(t+1)&= \bm Ax(t)+\bm Bu(t)\\
    y(t)&= \bm Cx(t)\\
    x(0)&= x_0.
\end{aligned}
\right.
\label{modelo}
\end{equation}
The linear system \eqref{modelo} is said to be a positive linear system provided that, for any non-negative input and non-negative initial state, the state trajectory and the output are always non-negative (see \cite{farina2000positive}).

Thus, in this work we will consider $x(t)\in \Rl^n$, $u(t)\in \Rl^n$, $y(t)\in \Rl^n$,
$\bm A\in\Rl^{n\times n}_+$, 
$\bm B\in\Rl^{n\times 1}_+$ y
$\bm C\in\Rl^{1\times n}_+$.
%$\bm D\in\Rl^{p\times m}$.
Given the initial condition $x(0)$ and the input sequence $u(t)$ with $t\in\Rl_+$ for the continuous case, and $t\in \Na$ for the discrete case. It is possible to predict the entire sequence of states $x(t)$ and outputs $y(t)$, $\forall t$.
The state $x(0)$ summarizes all the past history of the system.
The dimension $n$ of the state $x(t)\in \Rl^n$ is called the order of the system.
%The system is called proper (or strictly casual) if $\bm D=\bm 0$
The matrix tuple $(\bm A,\bm B,\bm C)$ denotes a positive realization.

%**************************************%
\subsection{Continuous-time PH distributions }
Let $\{X_t: t\geq 0\}$ be a Markov Jump Process with state space $E=\{1,2,\ldots,n,n+1\}$ where states $\{1,2,\ldots,n\}$ are transient states and state $\{n+1\}$ is an absorbing state.
Then $\{X_t: t\geq 0\}$ has the infinitesimal generator of dimension $(n+1)\times(n+1)$ given by
\[
\bm\Lambda=\begin{bmatrix}
    \T & \bm{t} \\
    \mathbf{0} & 0
\end{bmatrix}
\]
 where $\T=(t_{ij})$ is the square matrix of size $n\times n$, such that $t_{ii}<0$ and $t_{ij}\ge 0$, $i\neq j$; 
$\bm{0}\in\mathbb{R}^n$ is the row vector whose components are all zero, and 
 $\bm{t}=-\bm{T1}$, where $\bm 1$ is a vector of ones of appropriate dimension.
The elements of $\bm{t}$, denoted by $t_i$, are the intensities by which the process jumps to the absorbing state and are known as exit rates.

Now, let denote the initial probabilities of $\{X_t: t\geq 0\}$  by $\alpha_i=\Pb(X_0=i)$ for $1\leq i\leq n+1$, thus 
 $\bm \alpha^* = (\bm \alpha, \alpha_{n + 1})$ where $\bm \alpha = (\alpha_1, \alpha_2, \dots, \alpha_n)$. It is common that $\alpha_{n+1}=1-\sum_{i=1}^n \alpha_i$.
%We will call the vector $(\bm\alpha,\alpha_{n+1})$, where $\mathbf{\alpha}=(\alpha_1,\alpha_2,\ldots,\alpha_n)$, the \textit{initial probability vector} of $\{X_t: t\geq 0\}$.

%Some properties of this initial probability vector that immediately follow from the way it is defined are that the sum of its components is equal to $1$, and we can also write:
%\[
%\alpha_{n+1}=1-\bm\alpha\mathbf{1}
%\]

The time until absorption
    \[
    \tau=\inf\{t\geq 0 | X_t=n+1\}
    \]
is said to have a continuous phase-type (CPH) distribution and write 
$
\tau\sim CPH_n (\bm\alpha,\T)
$
where the subscript $n$  refers to the number of transient states, and sometimes it is  omitted.

The probability density function (pdf), cumulative distribution function (cdf) and Laplace Stiltjes transform of the $\tau$, respectively, are defined by
\begin{align}
    f(x)&=\bm\alpha e^{\T x}\bm t\nonumber\\
    F(x)&= 1-\bm\alpha e^{\T x}\bm 1\label{FCPH}\\
    f^*(s)&=\bm\alpha (s\bm I-\T)^{-1}\bm t.\nonumber
\end{align}

The transition probability matrix (tpm) at time $s$ of $\{X_t:t\geq0\}$ can be found through the following formula
\begin{equation}\label{tpm}
    \bm P(s)=e^{\bm \Lambda s}. 
\end{equation}

%**************************************%
\subsection{Discrete-time PH distributions}

%**************************************%
%\subsubsection{Discrete Phase-Type Distributions}

Discrete Phase-Type Distributions (DPH) are a class of probability distributions used to model the time until absorption in discrete-time Markov processes \cite{neuts1981}. They are widely employed in various fields including queueing theory, reliability analysis, and performance evaluation of computer and communication systems.

A DPH distribution is characterized by its transition probability matrix and a vector of initial probabilities. Let $\bm T$ be the $n \times n$ transition probability matrix, where $n$ is the number of phases in the distribution. The entry $t_{ij}$ represents the probability of transitioning from phase $i$ to phase $j$ in one time step. Additionally, let $\bm{\alpha}$ be the initial probability vector, where $\alpha_i$ represents the probability of starting in phase $i$. 

The probability of transitioning from phase $i$ to phase $j$ in $k$ time steps, denoted as $\mathbb{P}(X_k = j | X_0 = i)$, is given by the $(i, j)$-th entry of the matrix $\bm T^k$. %Therefore, the distribution of the absorption time $X$ follows a geometric distribution with parameter $\bm \alpha \bm T^n \bm {1}$, where $\mathbf{1}$ is a column vector of ones.

DPH distributions are versatile and can model various stochastic processes with different characteristics. Their flexibility makes them valuable in the analysis and simulation of systems with discrete-event dynamics (\cite{asmussen1996},\cite{bladt2011}).

%Phase type probability distributions are of great importance in modern probability and have widespread acceptance in recent years because of their computational properties in applied stochastic modelling. %The idea of the method of phases was introduced by Erlang (1909). %For a detailed discussion on discrete and continuous phase type distribution, one may refer to Neuts (1981). Asmussen, Nerman, and Olsson (1996), and Bladt, Esparza, and Nielsen (2011) discussed various estimation procedures of phase type distribution.

A DPH distribution is obtained as the distribution of time until absorption in a discrete-time Markov chain. Let consider a finite absorbing Markov chain $\{Y_k\}_{ k \ge 0}$ on the state space $E = \{1, 2, \dots, n, n + 1\}$, with tpm given by
\[
\bm P= \begin{bmatrix} 
\bm T & \bm t \\
\bm 0 & 1
\end{bmatrix} 
\]
where $\bm T = (T_{ij})_{n\times n}$, $\bm t = (t_1, t_2, \dots, t_n)_{n\times 1}$, such that $\bm t=(\bm I-\bm T)\bm 1$, and $\bm 0 = (0, 0, \dots, 0)_{1\times n}$.
%Let $p_i^*$ denote the probability of ultimate absorption into the state $m + 1$ starting from the initial state $i$. $p_i^* = 1$, $i = 1, 2, \dots, m$ if $(I - T)^{-1}$ exists.
We shall assume that $(\bm I - \bm T)^{-1}$ exists and the states $1, 2, \dots, n$ are transient, moreover the absorption into the state $n+1$, starting from each one of them is certain. As in the continuous case, let the initial probability distribution of the Markov chain $\{Y_k\}_{ k \ge 0}$ denoted by $\bm \alpha^* = (\bm \alpha, \alpha_{n + 1})$ where $\bm \alpha = (\alpha_1, \alpha_2, \dots, \alpha_n)$. 

The probability mass function (pmf) of a random variable $\tau$ following a DPH of order $n$ —denoted by $\tau\sim DPH(\bm\alpha,\bm T)$— is given by
\[
f_{\tau}(k) = 
\begin{cases}
    \bm\alpha\bm T^{k-1}\bm t, & \text{if } k \geq 1 \\
    \alpha_{n+1}, & \text{if } k=0.
\end{cases}
\]
The corresponding cdf is given by
\begin{equation}\label{FDPH}
    F_{\tau}(k)=1-\bm\alpha\bm T^{k}\bm 1.
\end{equation}

%Como las distribuciones tipo fase son muy importantes hoy en día en diferentes áreas de investigación, la simulación de las mismas no se ha dejado de lado.\cite{asmussen1996}, \cite{rodriguez2010maximum}.Particularmente en el paquete estadístico R, existen actualmente dos librerías que abordan la estimación de parámetros (y mucho más) de las distribuciones tipo fase  library(PhaseTypeR) de  \cite{rivas2022phasetyper} ylibrary(matrixdist) de \cite {bladt1matrixdist}.

Since PH distributions are very important nowadays in different areas of research, their simulation (see \cite{asmussen1996}, \cite{rodriguez2010maximum}) has not been left aside.
Particularly, considering the statistical package \texttt{R}, there are currently two packages that address parameter estimation (and much more) of PH distributions in a very efficient way:
\textit{PhaseTypeR} from \cite{rivas2022phasetyper} and
\textit{matrixdist} from \cite {bladt1matrixdist}.

The \texttt{R} package \textit{PhaseTypeR}  contains all the key functions—mean, (co)variance, pdf, cdf, quantile function, random sampling and reward transformations—for both CPH and DPH. % Additionally, they have also implemented the multivariate continuous case (MPH) and the multivariate discrete case (MDPH).
The \textit{matrixdist}  package\footnote{\url{https://cran.r-project.org/web/packages/matrixdist/index.html}} fits inhomogeneous phase-type (IPH) distributions using the EM-algorithm  to estimate the parameters in the model. 
This package also provides the density, cdf, quantile function, moments, and the opportunity of simulating from the distribution.
In this work, we will use the former one.
%Tools for phase-type distributions including the following variants: continuous, discrete, multivariate, in-homogeneous, right-censored, and regression. Methods for functional evaluation, simulation and estimation using the expectation-maximization (EM) algorithm are provided for all models. The methods of this package are based on the following references. Asmussen, S., Nerman, O., & Olsson, M. (1996). Fitting phase-type distributions via the EM algorithm, Olsson, M. (1996). Estimation of phase-type distributions from censored data, Albrecher, H., & Bladt, M. (2019) <doi:10.1017/jpr.2019.60>, Albrecher, H., Bladt, M., & Yslas, J. (2022) <doi:10.1111/sjos.12505>, Albrecher, H., Bladt, M., Bladt, M., & Yslas, J. (2022) <doi:10.1016/j.insmatheco.2022.08.001>, Bladt, M., & Yslas, J. (2022) <doi:10.1080/03461238.2022.2097019>, Bladt, M. (2022) <doi:10.1017/asb.2021.40>, Bladt, M. (2023) <doi:10.1080/10920277.2023.2167833>, Albrecher, H., Bladt, M., & Mueller, A. (2023) <doi:10.1515/demo-2022-0153>, Bladt, M. & Yslas, J. (2023) <doi:10.1016/j.insmatheco.2023.02.008>.

%******************************************
\section{Relation between PH and positive systems}
\label{ssec:relacion}

A positive realization of a continuous-time (or discrete-time) positive system can be transformed into a PH representation normalized by a positive number. Under the irreducible assumption, it was proven that the positive realization can be transformed into PH representation \cite{commault2003phase}.
Therefore a PH representation is a special positive realization with excitable constraint.

%We will consider systems using either continuous differential equations or discrete difference equations. 
Some concepts that are important to remember are the following. Positive realization  means that the matrices defining the system $(\bm A, \bm B, \bm C)$ have non-negative entries. Moreover, the excitability of $(\bm A, \bm B)$ indicates that the pair of matrices $(\bm A, \bm B)$ is excitable, implying that the system can be stabilized or controlled by applying a suitable input. On the other hand, $\bm A$ is asymptotically stable  if the system tends towards a stable equilibrium state as time approaches infinity. In other words, the solutions of the system converge to a constant value in the long run. Finally, a Metzler matrix is a square matrix in which all elements below the main diagonal are non-negative. This may be relevant in the context of stability analysis for dynamic systems.

%In summary, the text describes a positive and continuous dynamic system that has specific properties in terms of stability and matrix structure."

\begin{comment}
{\color{red}{
Realización positiva: Significa que las matrices que definen el sistema (A, B, C) tienen entradas no negativas.

Excitabilidad de (A, B): Esto indica que el par de matrices (A, B) es excitable, lo que implica que el sistema puede ser estabilizado o controlado mediante la aplicación de una entrada adecuada.

Asintóticamente estable: Indica que la matriz A del sistema es asintóticamente estable, lo que significa que el sistema tiende a un estado de equilibrio estable a medida que el tiempo tiende a infinito. En otras palabras, las soluciones del sistema convergen a un valor constante a largo plazo.

Matriz Metzler: Una matriz Metzler es una matriz cuadrada en la que todos los elementos por debajo de la diagonal principal son no negativos. Esto puede ser relevante en el contexto de análisis de estabilidad para sistemas dinámicos.

En resumen, el texto describe un sistema dinámico continuo y positivo que tiene propiedades específicas en términos de estabilidad y estructura de matriz.
}
\end{comment}

\subsection{Continuous-time}
Consider the system given in \eqref{modelo}.
Let
\begin{equation}\label{Delta}
\bm \Delta=\begin{bmatrix}
    \bm A &\bm B\\
    \bm 0&0
\end{bmatrix}
\end{equation}
be a matrix of dimension $(n+1)\times(n+1)$, called the augmented realization. %Let $\eta$ be the maximum eigenvalue of $\bm A$.
%=\max(\lambda_i)$, $i=1,\dots,n$, where $\lambda_i$ are the eigenvalues of $\bm A$.
%We define $\bm \Delta=\bm \Delta_1+\eta \bm I$ of dimension $(n+1)\times (n+1)$, where  $\eta$ is the maximum eigenvalue of $\bm A$, i.e., $\eta>|x|$ for all $x\in \sigma(\bm A)$ (spectrum of $\bm A$).

Assuming that $(\bm A,\bm B)$ is excitable, and $\bm A$  asymptotically stable and Metzler matrix  (i.e., the real dominant eigenvalue of $\bm A$ satisfies $\lambda_{\max}(\bm A)<0$), then there is a positive eigenvector $\bm \nu=(\nu_1,\dots,\nu_n,\nu_{n+1})$ of $\bm \Delta$, i.e., $\bm \Delta\bm \nu=0$, such that $\bm \nu$ is strictly positive (i.e., $\bm \nu>0$) (see Lemma 3 of \cite{kim2015relation}).
%Now, let $\lambda$ be 

%Let $\bm \nu=(\nu_1,\dots,\nu_n,\nu_{n+1})$ be the eigenvector associated with the maximum eigenvalue of $\bm \Delta$. In \cite{kim2015relation} is demonstrated that $\bm \nu>0$.
%Let $\mu_{max}=\max(\mu_i)$, $i=1,\dots,n$, where $\mu_i$ are the eigenvalues of $\bm \Delta$

\begin{teor} \cite{kim2015relation}
    Consider the continuous-time positive system with the positive realization $(\bm A,\bm B,\bm C)$ such that $(\bm A,\bm B)$ is excitable, and $\bm A$ is an asymptotically stable and Metzler matrix. Then, it is transformed into a CPH infinitesimal generator such that
%\[
%\bm \alpha= 
%\bm \Delta=\begin{pmatrix}
 %   \bm A_+ &-\bm A_+\bm 1\\
 %   \bm 0&0
%\end{pmatrix};
%\]
\begin{equation}
\left\{
\begin{aligned}
   \tilde{\bm \alpha}&=\bm C\bm U\\
        \tilde{\bm T}&=\bm U^{-1}\bm A\bm U\\
        \tilde{\bm t}&=\bm U^{-1}\bm B=-\tilde{\bm T}\bm 1
\end{aligned}
\right.
\label{eq:sistemaCPH}
\end{equation}
where 
\begin{equation}\label{forU}
\bm U=diag(\nu_1,\dots,\nu_n)/\nu_{n+1}.
\end{equation}
%has the properties of the DPH representation such as $\Tilde{\bm t}=(\bm I-\Tilde{\bm T})\bm 1$ and $\tilde{\bm \alpha}\ge 0$.
\end{teor}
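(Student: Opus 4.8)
The plan is to verify directly that the triple $(\tilde{\bm\alpha},\tilde{\bm T},\tilde{\bm t})$ produced by the diagonal change of variables $\bm U$ satisfies the defining structural requirements of a CPH representation: that $\begin{bmatrix}\tilde{\bm T}&\tilde{\bm t}\\\bm 0&0\end{bmatrix}$ is a sub-intensity matrix (Metzler off-diagonal, strictly negative diagonal, and exit vector $\tilde{\bm t}=-\tilde{\bm T}\bm 1\ge 0$ closing each row to zero) and that $\tilde{\bm\alpha}\ge 0$. The engine of the whole argument is the strictly positive eigenvector $\bm\nu=(\nu_1,\dots,\nu_n,\nu_{n+1})>0$ of the augmented realization $\bm\Delta$ supplied by the preceding lemma. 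Reading $\bm\Delta\bm\nu=0$ in block form gives the scalar-block identity $\bm A\,(\nu_1,\dots,\nu_n)^\top=-\bm B\,\nu_{n+1}$, which I will use repeatedly. Note that $\bm U=\mathrm{diag}(\nu_1,\dots,\nu_n)/\nu_{n+1}$ is diagonal with strictly positive entries, hence invertible, and that $\bm U\bm 1=(\nu_1,\dots,\nu_n)^\top/\nu_{n+1}$.

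First I would check that $\tilde{\bm T}=\bm U^{-1}\bm A\bm U$ has the off-diagonal sign pattern of a sub-intensity matrix. Since $\bm U$ is diagonal, its $(i,j)$ entry is $\tilde t_{ij}=a_{ij}\,\nu_j/\nu_i$; for $i\neq j$ this is the product of the non-negative Metzler entry $a_{ij}$ and the strictly positive ratio $\nu_j/\nu_i$, hence $\tilde t_{ij}\ge 0$. The diagonal is unchanged, $\tilde t_{ii}=a_{ii}$, and I will argue these are strictly negative: because $\bm A$ is Metzler one has $\lambda_{\max}(\bm A)\ge a_{ii}$ for every $i$ (shifting $\bm A+cI\ge 0$, the Perron--Frobenius eigenvalue of a non-negative matrix dominates each diagonal entry), and asymptotic stability gives $\lambda_{\max}(\bm A)<0$, so $a_{ii}<0$. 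Equivalently, $\tilde{\bm T}$ is similar to $\bm A$ and hence shares its spectrum, so it is again Metzler and Hurwitz.

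Next comes the conservativity identity, the conceptual centerpiece. I compute
\begin{equation*}
-\tilde{\bm T}\bm 1=-\bm U^{-1}\bm A\bm U\bm 1=-\bm U^{-1}\bm A\,(\nu_1,\dots,\nu_n)^\top/\nu_{n+1}=\bm U^{-1}\bm B=\tilde{\bm t},
\end{equation*}
where the third equality substitutes the eigenvector relation $\bm A\,(\nu_1,\dots,\nu_n)^\top=-\bm B\,\nu_{n+1}$. This single line simultaneously proves the asserted equality $\tilde{\bm t}=\bm U^{-1}\bm B=-\tilde{\bm T}\bm 1$ and, since $\bm U^{-1}$ is a positive diagonal matrix and $\bm B\ge 0$, shows $\tilde{\bm t}\ge 0$; thus the appended row and column close the generator correctly. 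Finally $\tilde{\bm\alpha}=\bm C\bm U\ge 0$ because $\bm C\ge 0$ and $\bm U$ has positive diagonal, so all three objects carry the required signs and $(\tilde{\bm\alpha},\tilde{\bm T})$ is a genuine CPH representation.

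I expect the main obstacle to be the strict negativity of the diagonal of $\tilde{\bm T}$ in the second step, which is the one place where the hypotheses ``Metzler'' and ``asymptotically stable'' must be combined through Perron--Frobenius rather than used merely formally; the conservativity computation, though it is the heart of the construction, reduces to a one-line substitution once the block form of $\bm\Delta\bm\nu=0$ is extracted. The remaining genuinely loose end is that $\tilde{\bm\alpha}=\bm C\bm U$ need not have total mass $\le 1$, which is precisely why the surrounding discussion speaks of a representation ``normalized by a positive number''; I would close this by rescaling $\tilde{\bm\alpha}$ to obtain an honest initial probability vector, the subgenerator $\tilde{\bm T}$ being unaffected by the scaling.
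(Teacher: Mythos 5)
Your proposal is correct, but note that the paper itself supplies no proof of this theorem: it is stated with a bare citation to \cite{kim2015relation}, and the only ingredient the paper records is the preceding remark that (by Lemma~3 of that reference) the augmented matrix $\bm\Delta$ admits a strictly positive null vector $\bm\nu$. Your argument is precisely the direct verification one would expect that reference to contain: the diagonal similarity $\bm U^{-1}\bm A\bm U$ preserves the non-negative off-diagonal pattern and the diagonal entries, the Perron--Frobenius shift $\bm A+c\bm I\ge 0$ combined with $\lambda_{\max}(\bm A)<0$ forces $a_{ii}<0$, and the block reading of $\bm\Delta\bm\nu=0$, namely $\bm A(\nu_1,\dots,\nu_n)^{\top}=-\bm B\,\nu_{n+1}$, yields the row-closure identity $-\tilde{\bm T}\bm 1=\bm U^{-1}\bm B=\tilde{\bm t}\ge 0$ in one line. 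You are also right to flag, as the only loose end, that $\tilde{\bm\alpha}=\bm C\bm U$ need not sum to one; the paper handles exactly this in the subsequent subsection by introducing the normalization constant $\psi$ and the vector $\tilde{\bm\alpha}^{*}=\tilde{\bm\alpha}/\psi$, so your proposed rescaling matches the paper's treatment. In short, your write-up fills a gap the paper leaves to the literature, and does so with the standard argument; no errors detected.
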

Therefore, an excitable positive realization can be transformed into the form of $CPH(\tilde{\bm \alpha},\tilde{\bm T})$, i.e., the positive realization is a superset of PH representations.

%******************************************
\subsection{Discrete-time}
In reference \cite{kim2015relation}, it is also presented the following theorem that shows that a positive realization can be transformed into a $DPH(\tilde{\bm \alpha},\tilde{\bm T})$ representation multiplied by a positive scalar (i.e., $\tilde{\bm \alpha}$ is not necessarily a probability vector).

\begin{teor}
    Assume that a realization $(\bm A,\bm B,\bm C)$ is denoted by  a positive realization satisfying \eqref{modelo} and $(\bm A,\bm B)$ is excitable (essential reachable) and stable. Then, there is a nonsingular matrix $\bm M$: 
    \begin{equation}\label{defM}
\bm M=diag(z) \text{ where }  z=(\bm I-\bm A)^{-1}\bm B \quad \text{are positive},
\end{equation}
    such that the realization $(\tilde{\bm \alpha},\Tilde{\bm T},\Tilde{\bm t})$
    which is defined by
\begin{equation}
\left\{
\begin{aligned}
   \tilde{\bm \alpha}&=\bm C\bm M\\
        \Tilde{\bm T}&=\bm M^{-1}\bm A\bm M\\
        \Tilde{\bm t}&=\bm M^{-1}\bm B
\end{aligned}
\right.
\label{eq:sistema}
\end{equation}
has the properties of the DPH representation such as $\Tilde{\bm t}=(\bm I-\Tilde{\bm T})\bm 1$ and $\tilde{\bm \alpha}\ge 0$.

%The author gave the demonstration of the theorem; giving the explicit form of the matrix $\bm M$: 
%The matrix $\bm M$ is given by

\end{teor}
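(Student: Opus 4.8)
The plan is to verify the two asserted identities by direct algebraic manipulation, after first confirming that $\bm M$ is well defined and invertible. Since $\bm A$ is stable, its spectral radius is strictly less than $1$, so $(\bm I-\bm A)^{-1}$ exists and admits the Neumann expansion $(\bm I-\bm A)^{-1}=\sum_{k=0}^{\infty}\bm A^{k}$. Because $\bm A\ge 0$ and $\bm B\ge 0$ in a positive realization, every term $\bm A^{k}\bm B$ is non-negative, whence $z=(\bm I-\bm A)^{-1}\bm B=\sum_{k=0}^{\infty}\bm A^{k}\bm B\ge 0$. The excitability (essential reachability) of $(\bm A,\bm B)$ guarantees that for each coordinate $i$ there is some $k$ with $(\bm A^{k}\bm B)_i>0$, so in fact $z>0$ componentwise; therefore $\bm M=diag(z)$ has strictly positive diagonal and is nonsingular, with $\bm M^{-1}=diag(1/z_1,\dots,1/z_n)$.

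Next I would establish the identity $\tilde{\bm t}=(\bm I-\tilde{\bm T})\bm 1$. Substituting the definitions gives $(\bm I-\tilde{\bm T})\bm 1=\bm 1-\bm M^{-1}\bm A\bm M\bm 1$, and multiplying on the left by $\bm M$ turns the claim into $\bm M\bm 1-\bm A\bm M\bm 1=\bm B$. Since $\bm M=diag(z)$ we have $\bm M\bm 1=z$, so this reduces to $(\bm I-\bm A)z=\bm B$, which is exactly the defining relation of $z=(\bm I-\bm A)^{-1}\bm B$. Multiplying back by $\bm M^{-1}$ recovers $\tilde{\bm t}=\bm M^{-1}\bm B$, the stated formula, so the identity holds.

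For the non-negativity $\tilde{\bm \alpha}\ge 0$ I would simply note that $\tilde{\bm \alpha}=\bm C\bm M=\bm C\,diag(z)$ has entries $\tilde\alpha_i=C_i z_i$. In a positive realization $\bm C\ge 0$, and we have already shown $z>0$, so each product is non-negative and $\tilde{\bm \alpha}\ge 0$. As a bonus one can confirm the full DPH structure: $\tilde T_{ij}=z_i^{-1}A_{ij}z_j\ge 0$ shows $\tilde{\bm T}$ is non-negative, and $\tilde{\bm t}=\bm M^{-1}\bm B\ge 0$ together with $\tilde{\bm t}=(\bm I-\tilde{\bm T})\bm 1$ yields $\tilde{\bm T}\bm 1\le\bm 1$, i.e.\ $\tilde{\bm T}$ is sub-stochastic.

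The routine part of the argument is purely algebraic and hinges on the single observation that $\bm M\bm 1=z$. The genuine obstacle is the strict positivity of $z$: establishing $z>0$ rather than merely $z\ge 0$ is where the excitability hypothesis is essential, since nonsingularity of $\bm M$ (and hence the very existence of the transformation) would fail if some $z_i$ vanished. I would therefore spell out the reachability-to-positivity step carefully, appealing to the Neumann series representation of $z$ as the sum of the reachability vectors $\bm A^{k}\bm B$.
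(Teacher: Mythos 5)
Your argument is correct and complete. For the record, the paper itself offers no proof of this theorem: it simply attributes the result to the cited reference of Kim (2015), just as the neighbouring theorems are justified by pointing to Kim or to Theorem~4 of Commault--Mocanu. So there is no in-paper argument to compare against; what you have written is essentially the standard verification one finds in that source. All the key steps check out: stability of the discrete-time system gives spectral radius less than one, hence the Neumann series $z=\sum_{k\ge 0}\bm A^{k}\bm B\ge 0$; excitability upgrades this to $z>0$ componentwise so that $\bm M=diag(z)$ is invertible; the single identity $\bm M\bm 1=z$ reduces $\tilde{\bm t}=(\bm I-\tilde{\bm T})\bm 1$ to the defining relation $(\bm I-\bm A)z=\bm B$; and non-negativity of $\bm C$ together with $z>0$ gives $\tilde{\bm\alpha}\ge 0$. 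You are also right to single out the strict positivity of $z$ as the one place where excitability is genuinely needed rather than mere positivity of the realization, and your bonus observations ($\tilde{\bm T}\ge 0$ and sub-stochastic) round out the claim that the transformed triple really has the DPH structure. The only stylistic remark is that your self-contained proof actually supplies more than the paper does, so it could stand as the proof the paper omits.
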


%**************************************%
\subsection{Relation}
%\subsection{Canonical representation}
%The optimization problem associated with general PH distributions becomes complex, particularly when dealing with a large number of phases. Bobbio et al. (1992) demonstrated in \cite{bobbio1992ml} that the estimation process significantly simplifies when working with acyclic rather than general PH distributions. This is attributed to the existence of a canonical representation for acyclic distributions, resulting in a reduction of the number of free parameters, thus facilitating the estimation process.

%The canonical representation is widely 
%In general, a DPH representation of a given distribution lacks uniqueness and minimality. Bobbio et al. (2003) \cite{bobbio2003acyclic} investigated a specific subclass within the DPH class, characterized by an acyclic graph representation. Within this subclass, a unique minimal representation, referred to as the canonical form, is admissible. Conversely, \cite{cumani1982canonical} demonstrated that for the subclass of CPH distributions with acyclic Markov chain generators, a canonical representation exists that is not only unique and minimal but also takes the form of a Coxian model with real transition rates.

%Let $\tilde{\bm \alpha}=\bm e_i \psi$, where $\bm e_i$ represents a canonical vector with all entries equal to zero except for the $i$-th entry, and $\psi\in\Rl_+$ denotes the normalization constant.
%Then the vector $\tilde{\bm \alpha}$ is not a probability vector, we redefine it by
Suppose we have the PH representation $(\tilde{\bm\alpha},\tilde{\bm T)}$; obtained as in the system \eqref{eq:sistemaCPH} for the continuous case, and as \eqref{eq:sistema} for the discrete case. Since 
 the vector $\tilde{\bm\alpha}$ may be not a probability vector (its elements have to sum up 1), therefore, we define the following initial probability vector 
\[
\tilde{\bm \alpha}^*:=\tilde{\bm \alpha}/\psi%=\bm e_i, \text { for some } i \in \{1,2,\dots,n\}.
\]
where $\psi\in\Rl_+$ denotes the normalization constant,
i.e., $\psi=\sum_{i=1}^n\alpha_i$.

%where $\psi$ is the normalization constant.
%\textcolor{red}{ojo aqui pq esto sería solamente para un vector (0,0,1), ¿qué pasa cuando las demás entradas son diferentes de 0???}
%Suppose we have $\tilde{\bm T}$ obtained as in the system \eqref{eq:sistemaCPH} for the continuous case and as \eqref{eq:sistema} for the discrete case.

\begin{teor}\label{teoimportante}
Let consider a random variable $X^*\sim PH(\tilde{\bm \alpha}^*,\tilde{\bm T})$ and define 
\begin{equation}\label{imp}
y_{PH}(t)=F_{X^*}(t)\cdot \psi \cdot u(t),
\end{equation}
where $F_{X^*}(\cdot)$ is the cdf obtained by the formula \eqref{FCPH} for the continuous case, 
and by the formula \eqref{FDPH} for the discrete case.
Then, the realization $(\bm A,\bm B,\bm C)$ of the model \eqref{modelo} is related to the PH distribution by
\[
y(t)=y_{PH}(t).
\]

%\begin{eqnarray}
%    y(t)&=&F_{X^*}(t)\cdot \psi \cdot u(t):=y_{PH}(t) \label{imp} 
%\end{eqnarray}

%    The realization $(\bm A,\bm B,\bm C)$ of the model \eqref{modelo} is related with the realization of a random variable  $X^*\sim PH(\tilde{\bm \alpha}^*,\tilde{\bm T})$ (here, can be CPH or DPH) using the following formula:
%    \begin{eqnarray}
%    y(t)&=&F_{X^*}(t)\cdot \psi \cdot u(t):=y_{PH}(t) \label{imp}  
%      \end{eqnarray}
     %y(\cdot)&=&F_{X^*}(\cdot)\cdot \psi \cdot u(\cdot). \label{imp}\\
%        y(k)&=&F_{X^*}(k)\cdot \psi \cdot u(k). \label{impD}
  
%where $F_{X^*}(\cdot)$ is obtained by the formula \eqref{FCPH} for the continuous case, and by the formula \eqref{FDPH}, for the discrete case.
%we use  for the $F_{X^*}(\cdot)$ and for the discrete case, we use the formula \eqref{FDPH}.
%for both cases: continuous and discrete.
\end{teor}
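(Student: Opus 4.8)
The plan is to verify $y(t)=y_{PH}(t)$ separately in the continuous and discrete settings by computing both sides in closed form, under the standing regime that the system is driven from rest ($x(0)=0$) by a unit-step input, so that $u(t)\equiv 1$ and the right-hand side of \eqref{imp} reduces to $F_{X^*}(t)\cdot\psi$. The common mechanism is that the similarity transform defining $\tilde{\bm T}$ turns the PH matrix exponential (resp.\ matrix power) back into that of $\bm A$, while the normalization constant $\psi=\tilde{\bm\alpha}\bm 1$ absorbs exactly the ``missing mass'' so the two expressions coincide.

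For the continuous case, first I would write the zero-state step response of \eqref{modelo} as
\begin{equation*}
y(t)=\bm C\int_0^t e^{\bm A(t-s)}\bm B\,\dd s=\bm C\bm A^{-1}\bigl(e^{\bm A t}-\bm I\bigr)\bm B,
\end{equation*}
which is legitimate since $\bm A$ is asymptotically stable, hence invertible. On the other side, \eqref{FCPH} gives $F_{X^*}(t)=1-\tilde{\bm\alpha}^* e^{\tilde{\bm T}t}\bm 1$, so with $\psi=\tilde{\bm\alpha}\bm 1$ and $\tilde{\bm\alpha}^*\psi=\tilde{\bm\alpha}$,
\begin{equation*}
y_{PH}(t)=\psi-\tilde{\bm\alpha}\,e^{\tilde{\bm T}t}\bm 1.
\end{equation*}
Substituting $\tilde{\bm\alpha}=\bm C\bm U$ and $e^{\tilde{\bm T}t}=\bm U^{-1}e^{\bm A t}\bm U$ (from $\tilde{\bm T}=\bm U^{-1}\bm A\bm U$) collapses this to $y_{PH}(t)=\bm C(\bm I-e^{\bm A t})\bm U\bm 1$. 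The two forms then match provided $\bm U\bm 1=-\bm A^{-1}\bm B$; this is the key identity, and I would obtain it from the eigenvector equation $\bm\Delta\bm\nu=0$, whose top block reads $\bm A(\nu_1,\dots,\nu_n)^{\top}+\bm B\nu_{n+1}=0$, i.e.\ $\bm A\bm U\bm 1=-\bm B$ after dividing by $\nu_{n+1}$ as in \eqref{forU}. Using that $\bm A^{-1}$, $\bm A$ and $e^{\bm A t}$ commute then finishes the continuous case.

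For the discrete case the argument is parallel but cleaner. The zero-state step response is the partial geometric sum
\begin{equation*}
y(t)=\bm C\sum_{j=0}^{t-1}\bm A^{j}\bm B=\bm C(\bm I-\bm A)^{-1}(\bm I-\bm A^{t})\bm B,
\end{equation*}
valid because stability forces the spectral radius of $\bm A$ below one, so $\bm I-\bm A$ is invertible. From \eqref{FDPH}, $F_{X^*}(t)=1-\tilde{\bm\alpha}^*\tilde{\bm T}^{t}\bm 1$, and with $\tilde{\bm\alpha}=\bm C\bm M$ and $\tilde{\bm T}^{t}=\bm M^{-1}\bm A^{t}\bm M$ the same manipulation yields $y_{PH}(t)=\bm C(\bm I-\bm A^{t})\bm M\bm 1$. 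Here the analogue of the key identity is immediate from \eqref{defM}: since $\bm M=\mathrm{diag}(z)$ with $z=(\bm I-\bm A)^{-1}\bm B$, we have $\bm M\bm 1=z=(\bm I-\bm A)^{-1}\bm B$, and commutativity of $(\bm I-\bm A)^{-1}$ with $\bm A^{t}$ gives $y_{PH}(t)=y(t)$.

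I expect the main obstacle to be conceptual rather than computational: pinning down the precise input/initial-state regime under which the claim is literally true. Equation \eqref{imp} multiplies by the instantaneous $u(t)$, whereas $y(t)=\bm C x(t)$ depends on the entire input history, so equality can hold only for a constant (unit-step) input with $x(0)=0$; making this hypothesis explicit is essential. The only genuinely nontrivial algebraic point is the continuous identity $\bm U\bm 1=-\bm A^{-1}\bm B$, which is why I would foreground the eigenvector relation $\bm\Delta\bm\nu=0$; everything else reduces to the commutation of functions of $\bm A$ and to the bookkeeping of the normalization constant $\psi$.
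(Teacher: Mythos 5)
Your proof is correct, but it is genuinely different from what the paper does: the paper offers no argument at all and simply defers to Theorem~4 of Commault and Mocanu, whereas you give a direct, self-contained verification. Your computation is sound in both regimes: the similarity transforms $\tilde{\bm T}=\bm U^{-1}\bm A\bm U$ and $\tilde{\bm T}=\bm M^{-1}\bm A\bm M$ conjugate $e^{\tilde{\bm T}t}$ and $\tilde{\bm T}^{t}$ back to functions of $\bm A$, the factors $\bm U$, $\bm U^{-1}$ cancel against $\tilde{\bm\alpha}=\bm C\bm U$, and everything reduces to the two identities $\bm U\bm 1=-\bm A^{-1}\bm B$ (which you correctly extract from the top block of $\bm\Delta\bm\nu=\bm 0$ after dividing by $\nu_{n+1}$) and $\bm M\bm 1=(\bm I-\bm A)^{-1}\bm B$ (immediate from \eqref{defM}). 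What your route buys is twofold. First, it exposes the structural reason the theorem holds: the impulse response $\bm C e^{\bm A t}\bm B$ equals $\psi$ times the density of $X^*$, so the step response is $\psi F_{X^*}(t)$; the citation hides this. Second, it forces into the open a hypothesis the paper's statement leaves implicit, namely that $y(t)=y_{PH}(t)$ can only be meant for the zero-state response to a constant input (the numerical examples indeed use $u(t)\equiv 50$ or $100$ with $x(0)=0$), since \eqref{imp} multiplies by the instantaneous $u(t)$ while $y(t)$ depends on the whole input history. The only cost of your approach relative to the paper's is length; the only caution is that your argument is tied to the constant-input regime, while the cited Commault--Mocanu result is phrased more generally in terms of transfer functions, so you should state the restriction as part of the theorem rather than as a remark.
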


\begin{proof}
See Theorem 4 of \cite{commault2003phase}.
\end{proof}
\renewcommand\qedsymbol{QED}

%As a consequence of Theorem \ref{teoimportante},
In particular, if $0<\psi<1$, and
$X\sim PH(\tilde{\bm \alpha},\tilde{\bm T})$, the relation between positive systems and this PH random variable is given by:
\begin{align*}
        y_{CPH}(t)&=\frac{(F_X(t)-(1-\psi))}{f_X(0)}\cdot  u(t),\quad \text{ with } f_{X}(0)=\tilde{\bm \alpha}\tilde{\bm t}>0,  \\
        y_{DPH}(t)&=(F_X(t)-(1-\psi))\cdot  u(t), \quad\text{ with } f_{X}(0)=1-\psi>0, 
        %&\text {which represent the probability point mass at zero.}
\end{align*} 
which represents the probability point mass at zero.

%And if $X\sim CPH(\tilde{\bm \alpha},\tilde{\bm T})$, the relation between positive systems and this CPH random variable is given by:
    
%with $f_{X}(0)>0$.

%***************************************
\section{Numerical examples}
\label{ssec:numerical}

In the following, we present a numerical example for the continuous case, and two applications using the discrete case.

%***************************************
\subsection{Continuous case}
Suppose we have a system with the following parameters
\[
\bm A=\begin{bmatrix}
-2&1&0\\
0&\-1&0\\
0&0&-1
\end{bmatrix};\quad
\bm B=\begin{bmatrix}
    1\\
    1\\
    1\\
\end{bmatrix};\quad
\bm C=\begin{bmatrix} 1&0 & 0  \end{bmatrix}.
\]
%One way of finding the vector $\bm \nu$ is the following. 
%Let $\eta$ be the maximum eigenvalue (in absolute value) of $\bm A$, i.e. $\eta$ Let $\eta$ be the maximum eigenvalue of $\bm A$. $=\max(\lambda_i)$, $i=1,\dots,n$, where $\lambda_i$ are the eigenvalues of $\bm A$.
%Define $\bm \Delta^*=\bm \Delta+\eta \bm I$ of dimension $(n+1)\times (n+1)$, where 
%Let $\eta$ be the maximum eigenvalue of $\bm A$, i.e., 

Let $\eta>|x|$ for all $x\in \sigma(\bm A)$ (spectrum of $\bm A$), %i.e., $\eta$ be the maximum eigenvalue of $\bm A$, 
and $\bm \nu$ be the eigenvector associated with the maximum eigenvalue of $\bm \Delta+\eta \bm I$, where $\bm\Delta$ is given in equation \eqref{Delta}.
In  this example, $\eta=2$, $\bm \nu=\begin{bmatrix}
0.5222330 &0.6963106 & 0.3481553 &0.3481553
\end{bmatrix}
$,
and the matrix $\bm U$, using the equation \eqref{forU} is given by
%          [,1]      [,2]          [,3] [,4]
%[1,] 0.5222330 0.7071068 -7.071068e-01    1
%[2,] 0.6963106 0.7071068 -7.071068e-01    0
%[3,] 0.3481553 0.0000000  1.570092e-16    0
%[4,] 0.3481553 0.0000000  0.000000e+00    0

\[
\bm U=\begin{bmatrix}
1.5&0&0\\
0&2&0\\
0&0&1
\end{bmatrix}.
\]
Using \eqref{eq:sistemaCPH} we get the CPH parameters
\[
\tilde{\bm \alpha}=\begin{bmatrix}
    1.5&0&0
\end{bmatrix};\quad
\Tilde{\bm T}=\begin{bmatrix}
    -2&1.3\bar{3}&0\\
    0&-1&0.5\\
    0& 0&-1\\
\end{bmatrix};\quad
\Tilde{\bm t}=
\begin{bmatrix}
    0.6\bar{6}\\
    0.5\\
    1\\
\end{bmatrix}.
\]
We define $
\tilde{\bm \alpha}^*=\begin{bmatrix}
    1&0&0
\end{bmatrix};$ and $\psi=1.5$.
Therefore, let $X^*\sim CPH(\tilde{\bm \alpha}^*,\tilde{\bm T})$. In Figure \ref{fig:MCC} we present the continuous time Markov chain (CTMC) graph of $X^*$, its pdf and cdf are presented in Figure \ref{fig:cdfCPH}.

\begin{figure}[H]%[!htbp]
    \centering
    \begin{subfigure}{0.52\textwidth}
        \centering
        \includegraphics[width=\textwidth]{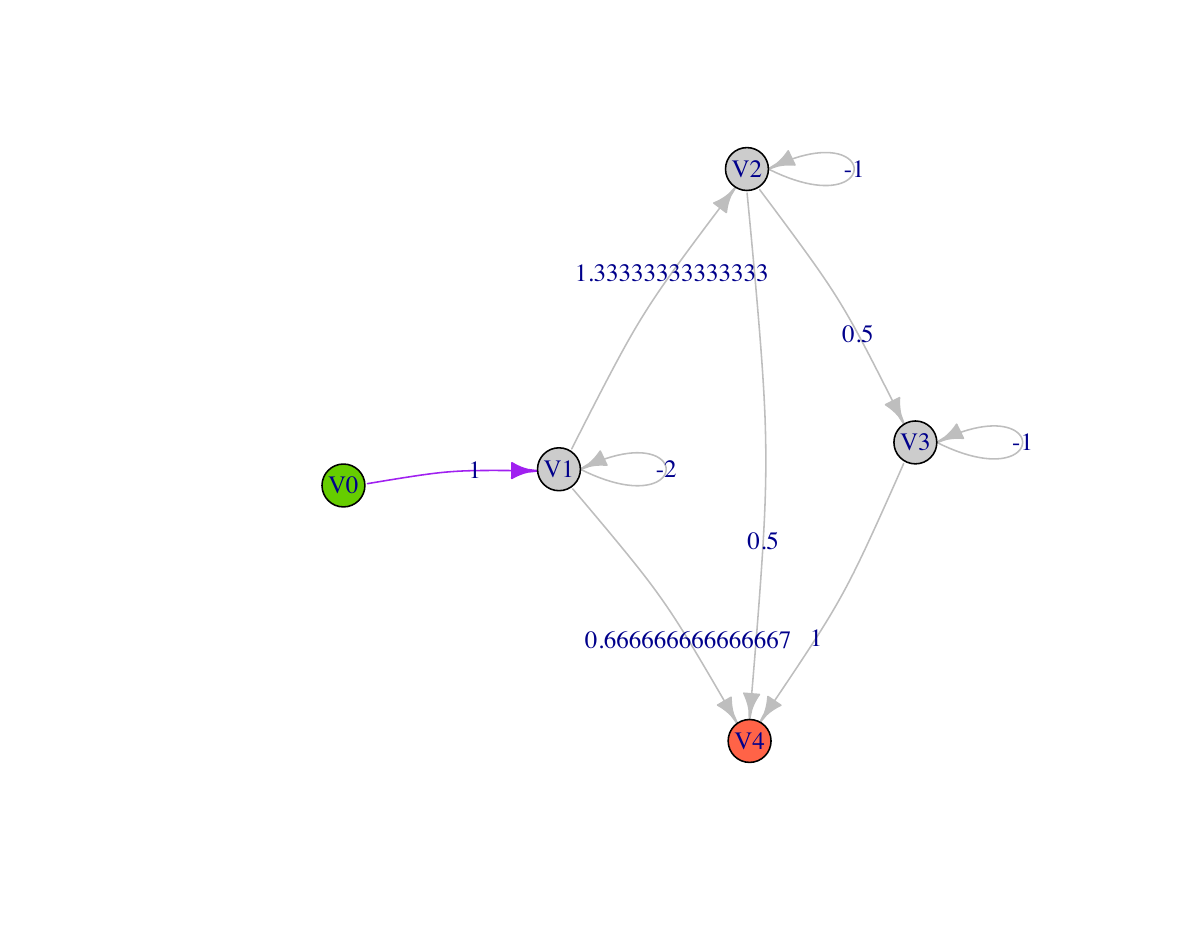}
       \caption{Intensities of the CTMC.}
    \label{fig:MCC}
    \end{subfigure}
    \hfill
    \begin{subfigure}{0.38\textwidth}
        \centering
        \includegraphics[width=\textwidth]{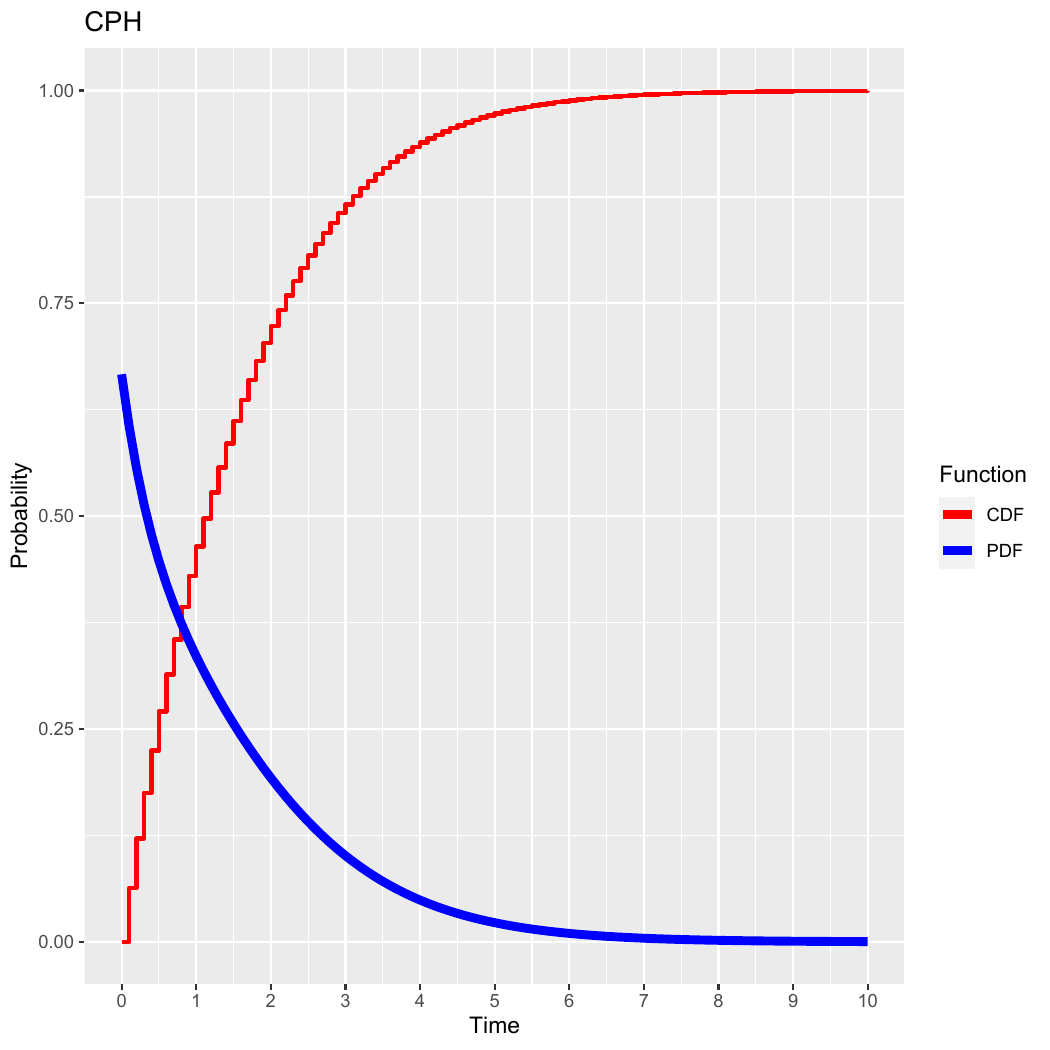}
       \caption{pdf and cdf.}
    \label{fig:cdfCPH}
    \end{subfigure}
    \caption{CPH random variable.}
    \label{fig:CPH}
\end{figure}

\begin{comment}

\begin{figure}[H]
    \centering
\includegraphics[height=0.5\textwidth,width=0.7\textwidth]{Con_MC.pdf}
\vskip-2em
    \caption{Markov chain.}
    \label{fig:MCC}
\end{figure}

\begin{figure}[H]
    \centering
\includegraphics[height=0.7\textwidth,width=0.7\textwidth]{CPH_1.pdf}
    \caption{CPH probability density function and cumulative density function.}
    \label{fig:cdfCPH}
\end{figure}
\end{comment}
%And we relation the cdf of $X^*$ with the realization of the positive system throw the formula \eqref{imp} using $u(t)=50$ for all $t$. In Figure \ref{fig:cph2} we present both realizations.

We establish a relationship between the cdf of \(X^*\) and the realization of the positive system through the formula \eqref{imp}, employing \(u(t)=50\) for all \(t\in[0,10]\). Both realizations are illustrated in Figure \ref{fig:cph2}.

\begin{figure}[!htb]
    \centering
   \includegraphics[height=0.5\textwidth, width=0.7\textwidth]{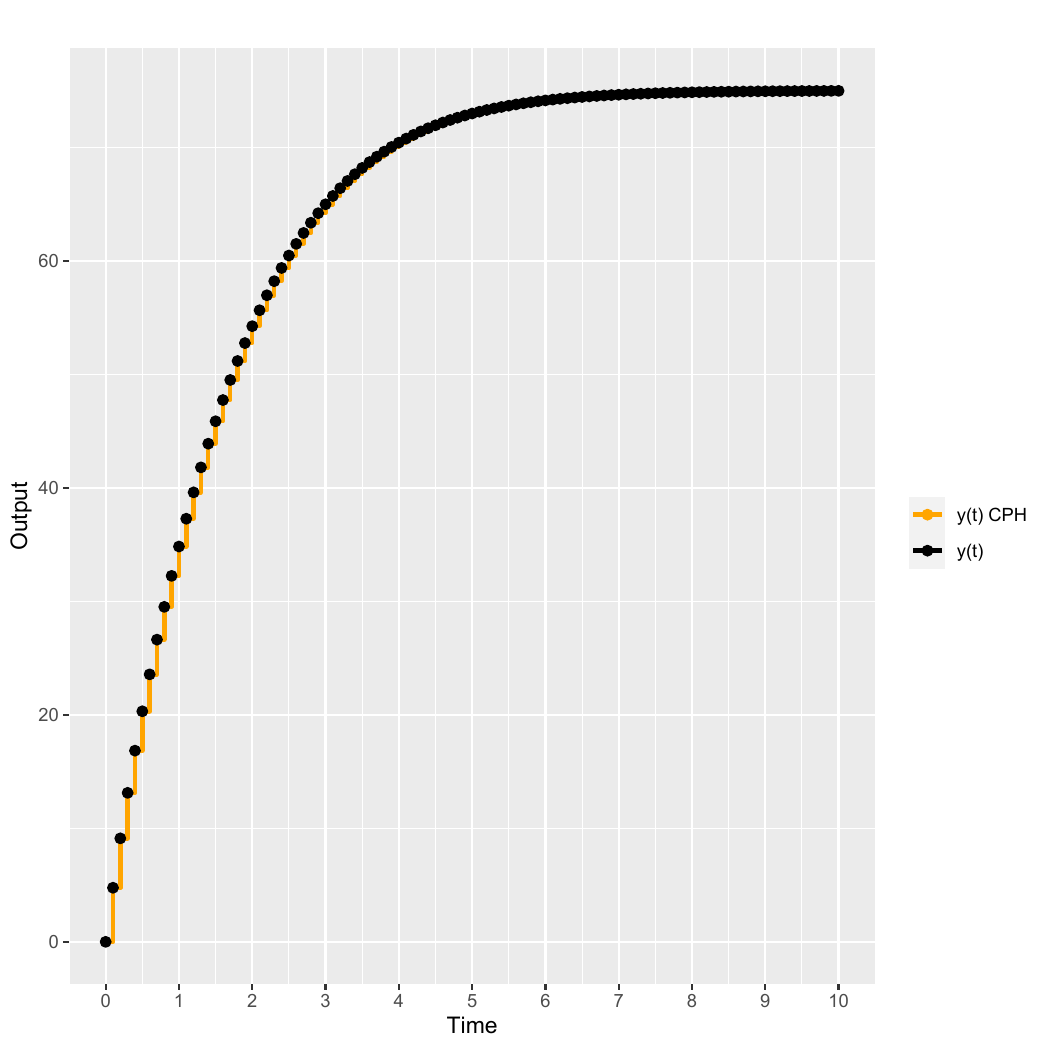}
    \caption{Realizations of the transformed CPH and the positive system.}
    \label{fig:cph2}
\end{figure}

Using the equation \eqref{tpm}, we calculate the transition probabilities at different times: from 0 to 8. In Figure \ref{fig:tpm} we present these transition probabilities.

\begin{figure}[H]
    \centering
   \includegraphics[height=1\textwidth, width=1\textwidth]{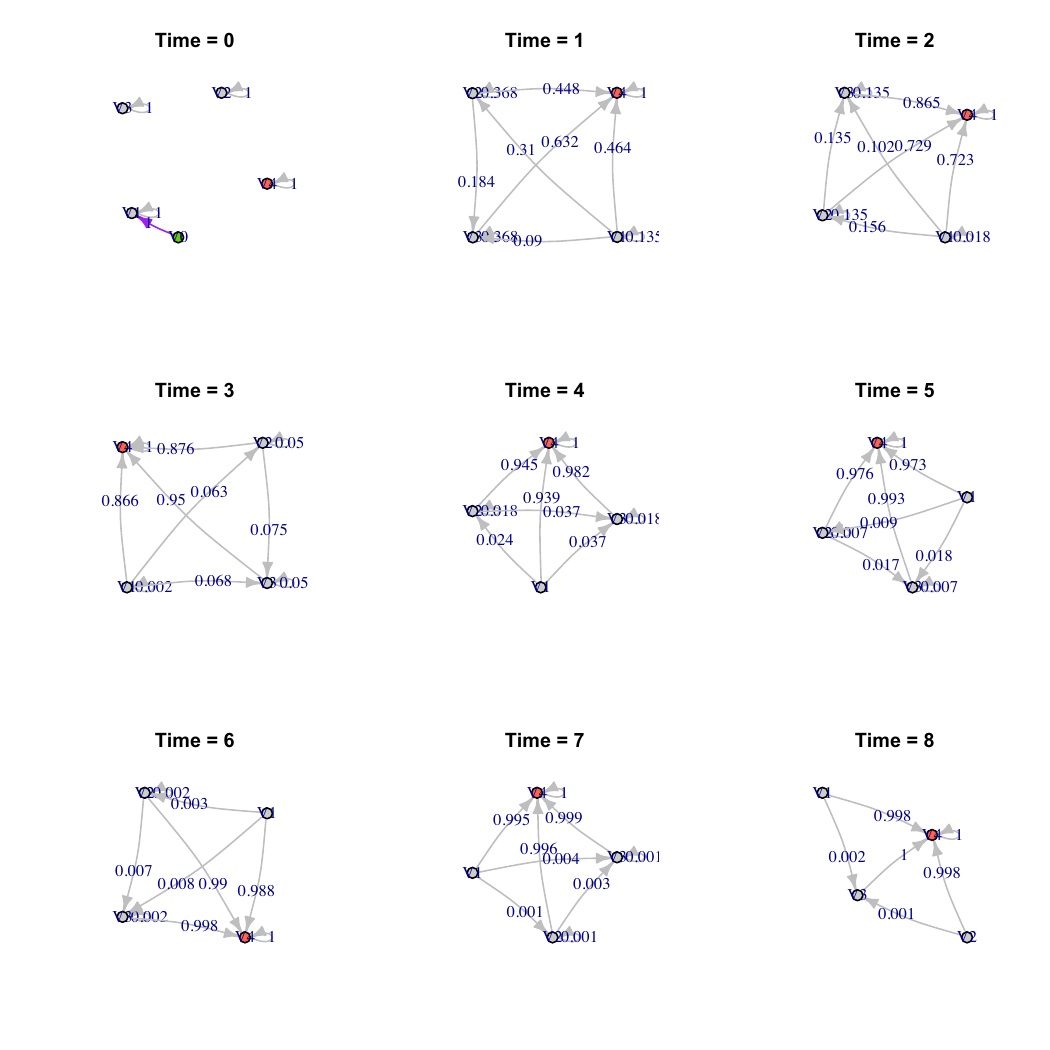}
   \vskip-2em
    \caption{Transition probabilities at different times.}
    \label{fig:tpm}
\end{figure}

%At time 6 the process has an output of 74.132284 and at time 10, the output is 74.975030.
%6.0 74.132284
%10.0 74.975030
%Al tiempo 6 ya no cambia el output...74.975030

%***************************************
\subsection{Discrete case}
%We will consider two real applications.
\subsubsection{Student dynamics}
We will explore an application focusing on student dynamics\footnote{\url{http://cse.lab.imtlucca.it/~bemporad/teaching/ac/pdf/04a-TD_sys.pdf}}.
\begin{comment}
% Problem statement:
    \begin{itemize}
        \item 3-grades undergraduate course.
        \item Rate of students promoted, failure, and dropouts are roughly constant.
       \item Direct enrollment in 2nd and 3rd academic year is not allowed.
       \item Students cannot enrol for more than 3 years.
    \end{itemize}
\end{comment}
The problem statement revolves around a 3-grades undergraduate course with several key constraints. Firstly, the rates of student promotion, failure, and dropouts remain relatively constant. Additionally, direct enrollment in the 2nd and 3rd academic years is prohibited, imposing a sequential progression through the course. Moreover, students are restricted from enrolling for more than 3 years, adding a temporal limitation to their academic journey. These constraints collectively shape the educational landscape, influencing student progression and academic outcomes within the framework of the undergraduate program.
The notation is the following:
    \begin{itemize}
        \item $k$: Year.
        \item $x_i(k)$: Number of students enrolled in grade $i$, $i=1,2,3$ at year $k$.
        \item $u(k)$: Number of freshmen at year $k$.
        \item $y(k)$: Number of graduates at year $k$.
        \item $\xi_i$: Promotion rate during grade $i$, $0\le \alpha_i\le 0$.
        \item $\beta_i$: Failure rate during grade $i$, $0\le \beta_i\le 0$.
        \item $\gamma_i$: Dropout rate during grade $i$, $\gamma_i=1-\alpha_i-\beta_i\ge 0$.
    \end{itemize}

The 3rd order linear discrete-time system is given by
\begin{eqnarray*}
x_1(k+1)&=&\beta_1x_1(k)+u(k)\\
x_2(k+1)&=&\xi_1x_1(k)+\beta_2x_2(k)\\
x_3(k+1)&=&\xi_2x_2(k)+\beta_3x_3(k)\\
y(k)&=&\xi_3x_3(k).
\end{eqnarray*}
On a matrix form:
\begin{eqnarray*}
x(k+1)&=&\begin{bmatrix}
\beta_1&0&0\\
\xi_1&\beta_2&0\\
0&\xi_2&\beta_3
\end{bmatrix}x(k)+\begin{bmatrix}
    1\\
    0\\
    0\\
\end{bmatrix}u(k)\\
y(k)&=&\begin{bmatrix} 0 & 0 & \xi_3 \end{bmatrix} x(k).
\end{eqnarray*}

Therefore,
\[
\bm A=\begin{bmatrix}
\beta_1&0&0\\
\xi_1&\beta_2&0\\
0&\xi_2&\beta_3
\end{bmatrix};\quad
\bm B=\begin{bmatrix}
    1\\
    0\\
    0\\
\end{bmatrix};\quad
\bm C=\begin{bmatrix} 0 & 0 & \xi_3 \end{bmatrix}.
\]
Using \eqref{defM} we get
\[
\bm M=\begin{bmatrix}
\frac{1}{1-\beta_1}&0&0\\
0&\frac{\xi_1}{(1-\beta_1)(1-\beta_2)}&0\\
0&0&\frac{\xi_1\xi_2}{(1-\beta_1)(1-\beta_2)(1-\beta_3)}
\end{bmatrix};
\]
using \eqref{eq:sistema} the DPH parameters are given by
\[
\tilde{\bm \alpha}=\begin{bmatrix}
    0&0&\frac{\xi_1\xi_2\xi_3}{(1-\beta_1)(1-\beta_2)(1-\beta_3)}
\end{bmatrix};
\Tilde{\bm T}=\begin{bmatrix}
    \beta_1&0&0\\
    1-\beta_2&\beta_2&0\\
    0&1-\beta_3&\beta_3\\
\end{bmatrix};
\Tilde{\bm t}=
\begin{bmatrix}
    1-\beta_1\\
    0\\
    0\\
\end{bmatrix}.
\]

Note that  $\psi=\frac{\xi_1\xi_2\xi_3}{(1-\beta_1)(1-\beta_2)(1-\beta_3)}\le 1$, then the vector $\tilde{\bm \alpha}$ may not be a probability vector.

If $X^*\sim DPH(\tilde{\bm \alpha}^*,\tilde{\bm T})$ where $\tilde{\bm \alpha}^*=\begin{bmatrix}0&0&1\end{bmatrix}$ then the absorbing state is defined as ``graduated from school'' and the random variable $X^*$ represents the time it takes for the student to graduate from school.
In the event that $\tilde{\bm \alpha}$ is considered as the vector of initial probabilities, the absorbing state would be ``leaving school'' (either due to graduation or because the student decided to withdraw).

\subsubsection{Simulation}

Suppose 
$\xi_1=0.60,\xi_2=0.80,\xi_3=0.9$
$\beta_1=0.20$, $\beta_2=0.15$, $\beta_3=0.08$,
$u(k)=50$, for $k=0,1,2,\dots,10$, then $\psi=0.6905371$, this means that we have a point mass at zero.

However, taking $\tilde{\bm \alpha}^*=\begin{bmatrix}
    0&0&1
\end{bmatrix}$, $\tilde{\bm T}$ and $\Tilde{\bm t}$, in Figure 
\ref{fig:generalDPH} we present the cdf, pmf and a random sample of size 1000 of a $X^*\sim DPH(\tilde{\bm \alpha}^*,\tilde{\bm T})$. The mean of the random sample was 3.482; i.e., on average, it takes students approximately three and a half years to graduate from school.
%DPH($\tilde{\bm \alpha}^*,\tilde{\bm T}$).

\begin{figure}[H]%[!htbp]
    \centering
    \begin{subfigure}{0.52\textwidth}
        \centering
        \includegraphics[width=\textwidth]{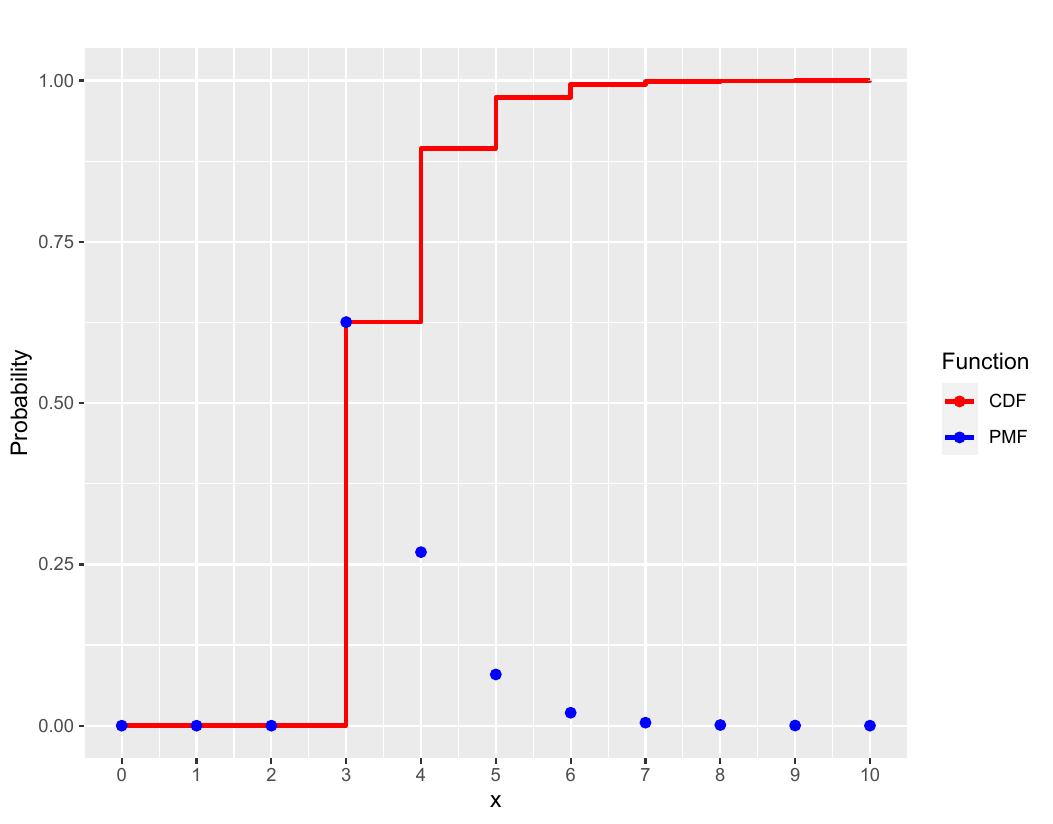}
       \caption{CDF and PMF.}
    \label{fig:cdfpdf}
    \end{subfigure}
    \hfill
    \begin{subfigure}{0.38\textwidth}
        \centering
        \includegraphics[width=\textwidth]{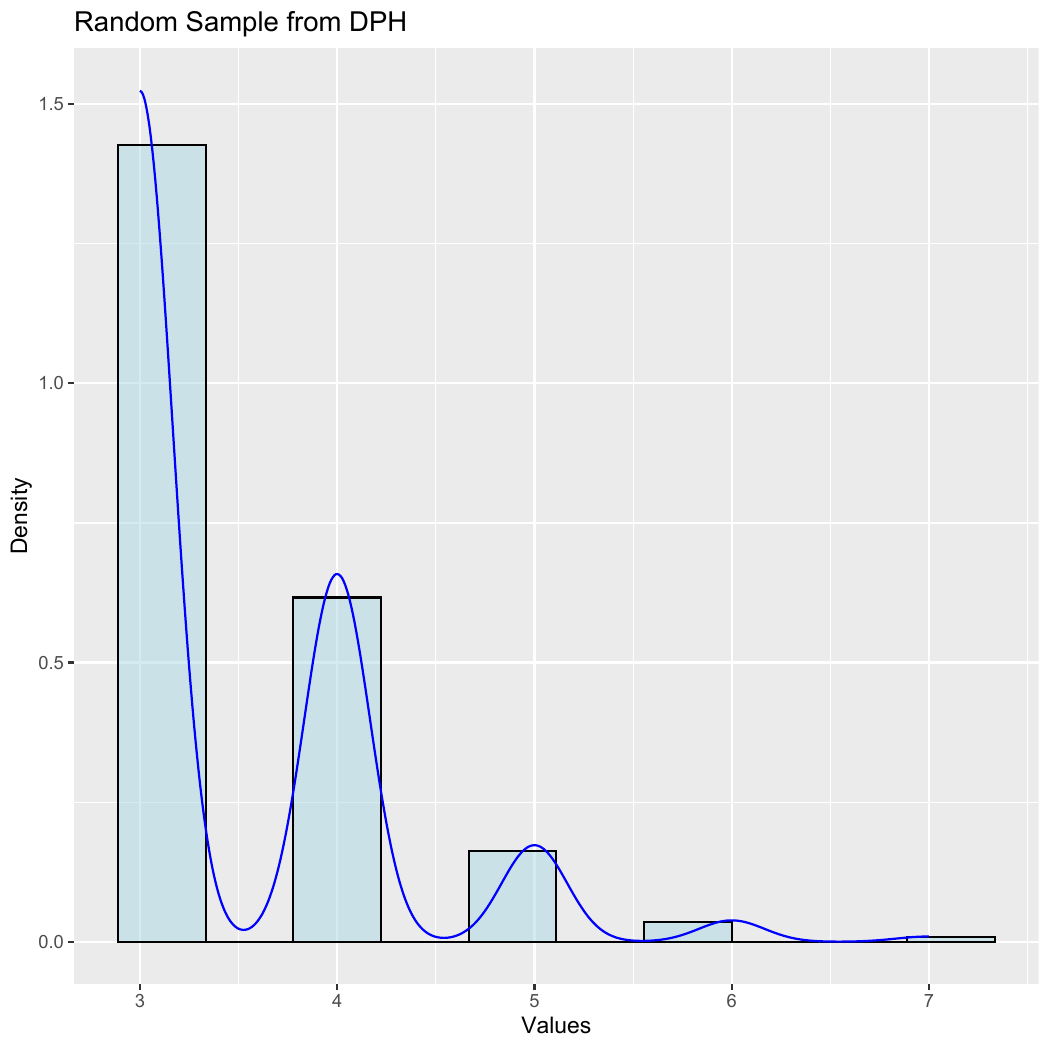}
       \caption{Random sample.}
    \label{fig:sample}
    \end{subfigure}
    \caption{DPH($\tilde{\bm \alpha}^*,\tilde{\bm T}$).}
    \label{fig:generalDPH}
\end{figure}

In Figure \ref{fig:yk} we present the realizations of the model \eqref{modelo}  and the $X^*\sim DPH(\tilde{\bm \alpha}^*,\tilde{\bm T})$ using the formula \eqref{imp}.%, which  are the same.

\begin{figure}[H]
    \centering
    \includegraphics[width=0.7\textwidth]{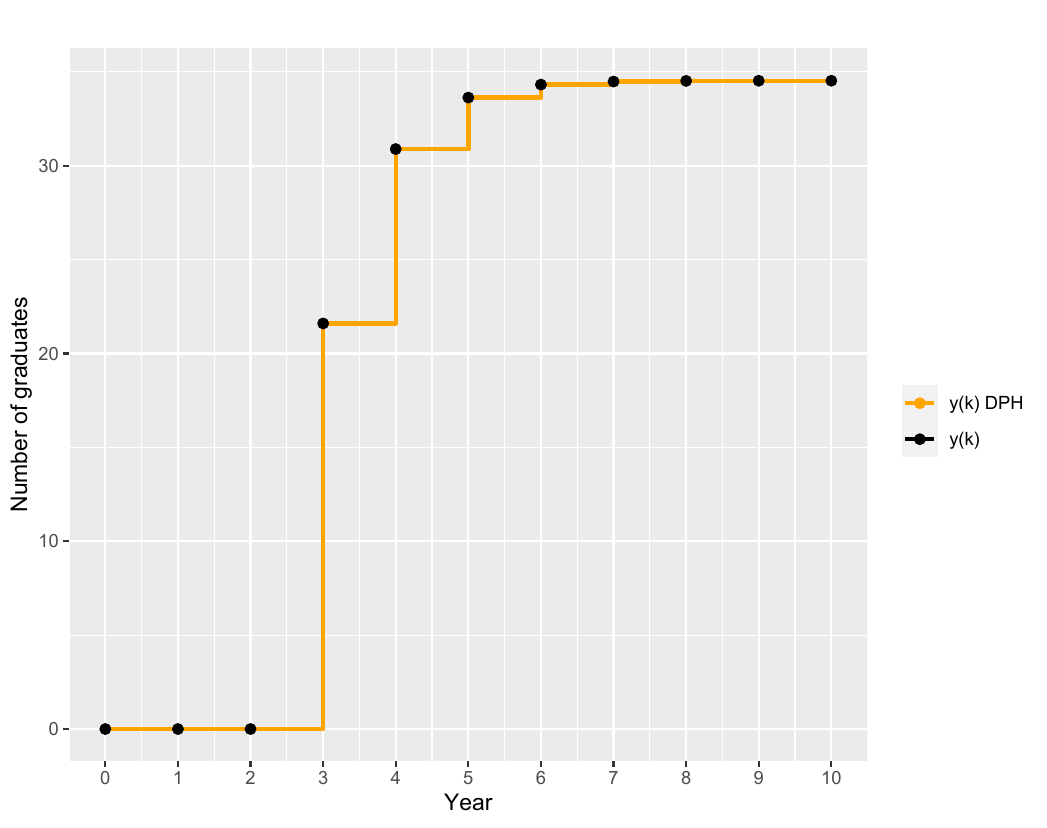}
    \caption{$y(k)$ and $y_{DPH}(k)$.}
    \label{fig:yk}
\end{figure}
%At the end of the third year, more than 20 students graduate. From sixth year, approximately 34 students graduate.
%Having the information from Figures \ref{fig:sample} and \ref{fig:yk}, we have that, approximately 63\% of the students finish at their 3rd year, while 27\% finish their studies at their 4th year.

Based on the data presented in Figures \ref{fig:sample} and \ref{fig:yk}, it can be observed that approximately 63\% of students complete their studies by the end of their third year, whereas 27\% complete their studies by the end of their fourth year.
However, the system's realization does not yield information regarding the percentage of students who dropout.
In Figure \ref{fig:pmf} we present the pmf of the $X^*\sim DPH(\tilde{\bm \alpha}^*,\tilde{\bm T})$ and $X\sim DPH(\tilde{\bm \alpha},\tilde{\bm T})$.
\begin{figure}[!htb]
    \centering
    \includegraphics[width=0.7\textwidth]{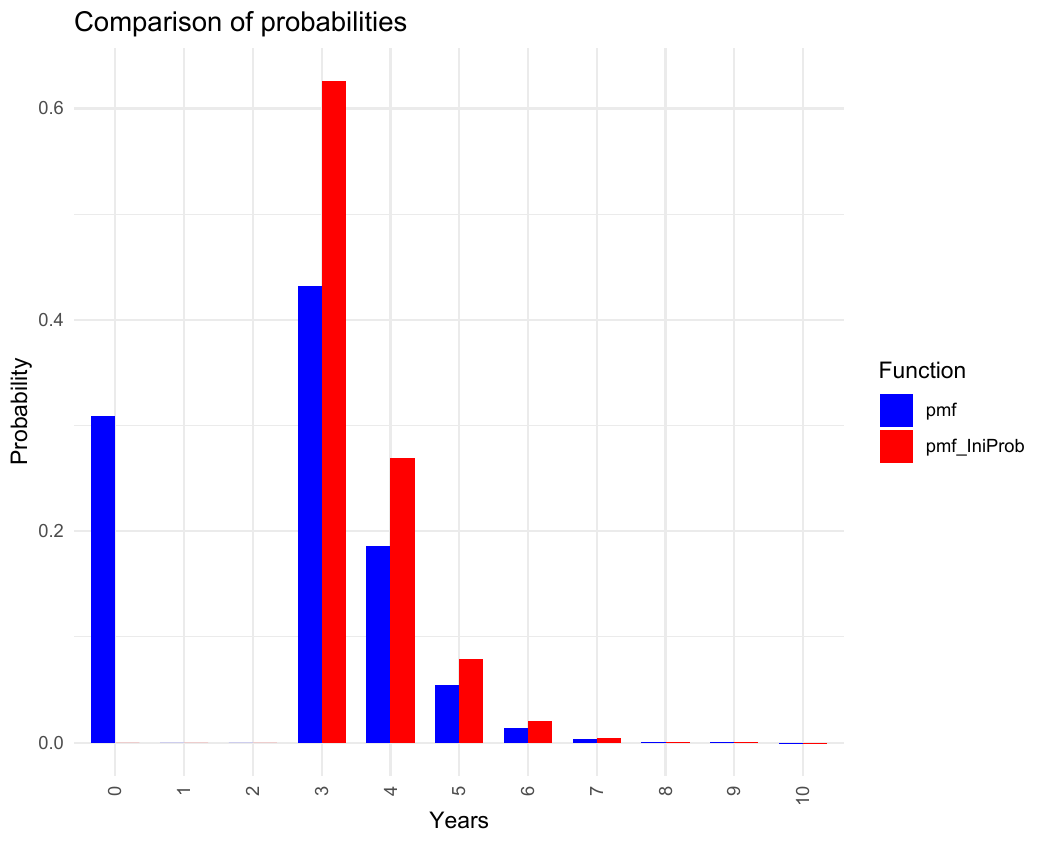}
    \caption{PMF with and without probability at point mass at zero.}
    \label{fig:pmf}
\end{figure}
%In Figure \ref{fig:pmf} we observe that approximately 30\% of the students do not graduate (point mass at zero), 43\% finish at three years, and 19\% at four years.
 It can be observed that around 31\% of students do not graduate (indicated by a point mass at zero), while 43\% complete their studies in three years, and 19\% do so in four years.
%; this can not be seen if we 

\begin{comment}
   Año Porcentaje
1    0       30.9
2    1        0.0
3    2        0.0
4    3       43.2
5    4       18.6
6    5        5.5
7    6        1.4
8    7        0.3
9    8        0.1
10   9        0.0
11  10        0.0
\end{comment}

%*********************************
\subsubsection{Supply chain}
\begin{comment}
Problem statement
\begin{itemize}
    \item $S$ purchases the quantity $u(k)$ of raw material at each month $k$.
    \item A fraction $\delta_1$ of raw material is discarded, a fraction $\xi_1$ is shipped to producer $P$.
    \item A fraction $\xi_2$ of product is sold by $P$ to retailer $R$, a fraction $\delta_2$ is discarded.
    \item Retailer $R$ returns a fraction $\beta_3$ of defective products everymonth, and sells a fraction $\gamma_3$ to customers.
\end{itemize}
\end{comment}
The problem statement in the context of the supply chain revolves around the dynamics of various entities involved. At each month $k$, entity $S$ purchases a quantity $u(k)$ of raw material. Subsequently, a portion $\delta_1$ of the acquired raw material is discarded, while another portion $\xi_1$ is directed towards producer $P$. Upon receipt, producer $P$ processes the raw material, resulting in a product of which a fraction $\xi_2$ is sold to retailer $R$, while the remaining fraction $\delta_2$ is deemed unusable and discarded. Retailer $R$ further interacts with customers, selling a portion $\gamma_3$ of the products received from producer $P$, while also experiencing returns of defective products, with a fraction $\beta_3$ being returned each month. These intricate dynamics within the supply chain domain significantly impact the operational efficiency and effectiveness of the entire supply chain network.
The mathematical model is given by:
\begin{equation}
\left\{
\begin{aligned}
   x_1(k+1)&=(1-\xi_1-\delta_1)x_1(k)+u(k)\\
x_2(k+1)&=\xi_1x_1(k)+(1-\xi_2-\delta_2)x_2(k)+\beta_3x_3(k)\\
x_3(k+1)&=\xi_2x_2(k)+(1-\beta_3\gamma_3)x_3(k)\\
y(k)&=\gamma_3x_3(k)
\end{aligned}
\right.
\label{siseje2}
\end{equation}
where
\begin{itemize}
    \item $k$: Month counter.
    \item $x_1(k)$: Raw material in $S$.
    \item $x_2(k)$: Products in $P$.
    \item $x_3(k)$: Products in $R$.
    \item $y(k)$: Products sold to customers.
\end{itemize}

Taking the values
$\delta_1=0.15$, $\delta_2=0.08$,
$\xi_1=0.6$, $\xi_2=0.8$
$\beta_3=0.05$
$\gamma_3=0.8$
$u(k)=100$, $k=0,1,2,\dots,13$; and following the same methodology as the before example, we obtain the DPH parameters using \eqref{eq:sistema}:
\[
\tilde{\bm \alpha}=\begin{bmatrix}
     0   & 0 & 0.72\\
\end{bmatrix};\quad
\Tilde{\bm T}=\begin{bmatrix}
 0.25& 0.00& 0.00\\
 0.83& 0.12& 0.05\\
 0.00& 0.85& 0.15\\
\end{bmatrix};\quad
\Tilde{\bm t}=
\begin{bmatrix}
    0.75\\
    0\\
    0\\
\end{bmatrix}.
\]
\begin{comment}
\[
\tilde{\bm \alpha}=\begin{bmatrix}
     0   & 0 & 0.7231638\\
\end{bmatrix};
\Tilde{\bm T}=\begin{bmatrix}
 0.2500000& 0.00& 0.00000000\\
 0.8329412& 0.12& 0.04705882\\
 0.0000000& 0.85& 0.15000000\\
\end{bmatrix};
\Tilde{\bm t}=
\begin{bmatrix}
    0.75\\
    0\\
    0\\
\end{bmatrix}.
\]
\end{comment}
%La interpretación es la siguiente: el estado absorbente lo denominaremos como Cliente, y la variable DPH medirá el tiempo (en meses) que le toma a los productos llegar hasta el cliente.

If $\tilde{\bm \alpha}^*=\begin{bmatrix}
     0   & 0 &1\\
\end{bmatrix}$ and $X^*\sim DPH(\tilde{\bm \alpha}^*,\tilde{\bm T})$,
the interpretation is as follows: we will refer to the absorbing state as ``Customer'', and $X^*$ will measure the time (in months) it takes for the products to reach the customer.

Taking a random sample of $X^*$ of size 1000, we get a mean of 3.855. Moreover, from its pmf function,
53\% of the products arrive to the customers in 3 months,
29\% in 4 months,
12\% in 5 months,
5\% in 6 months, and
1\% in more than 7 months.

In Figure \ref{fig:eje21} we present the realizations of the system \eqref{siseje2} and the DPH random variable.
In Figure \ref{fig:eje22} we present the probabilities  of the random variables DPH with and without  the probability mass at 0 which is 0.28.
%Note that %0.7231638

\begin{figure}[H]%[!htbp]
    \centering
    \begin{subfigure}{0.52\textwidth}
        \centering
        \includegraphics[width=\textwidth]{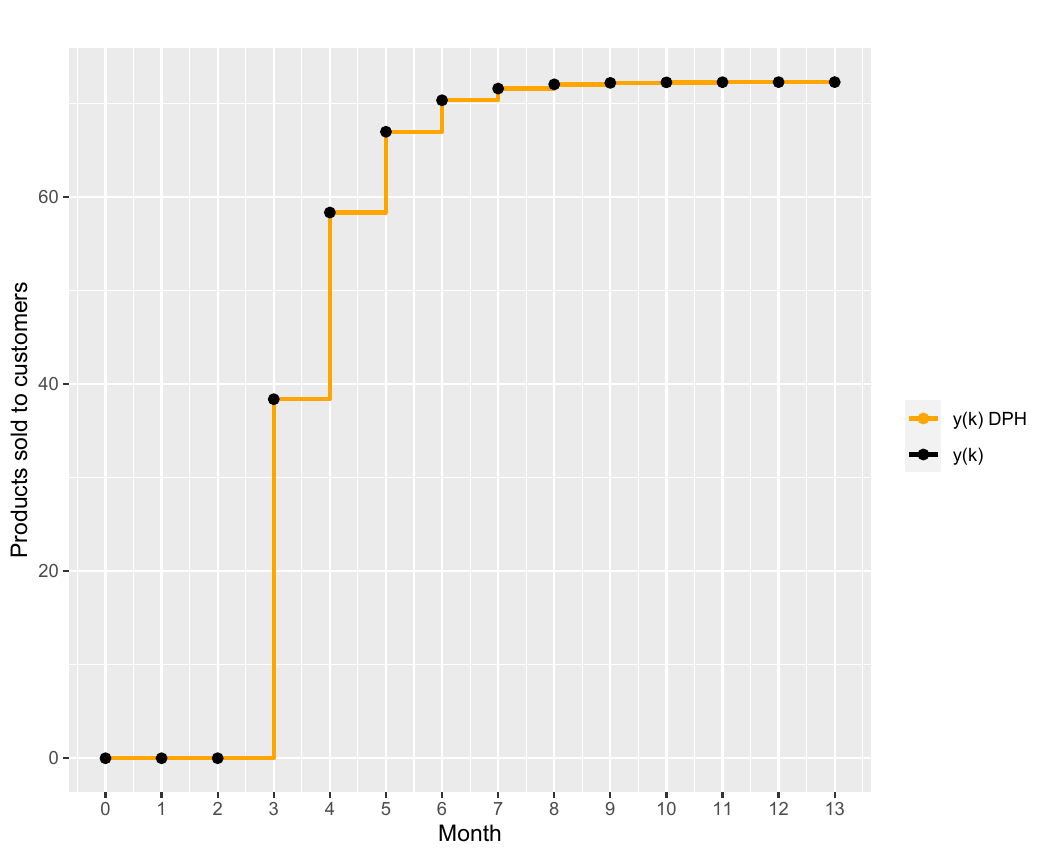}
       \caption{ Realizations. }
    \label{fig:eje21}
    \end{subfigure}
    \hfill
    \begin{subfigure}{0.45\textwidth}
        \centering
        \includegraphics[width=\textwidth]{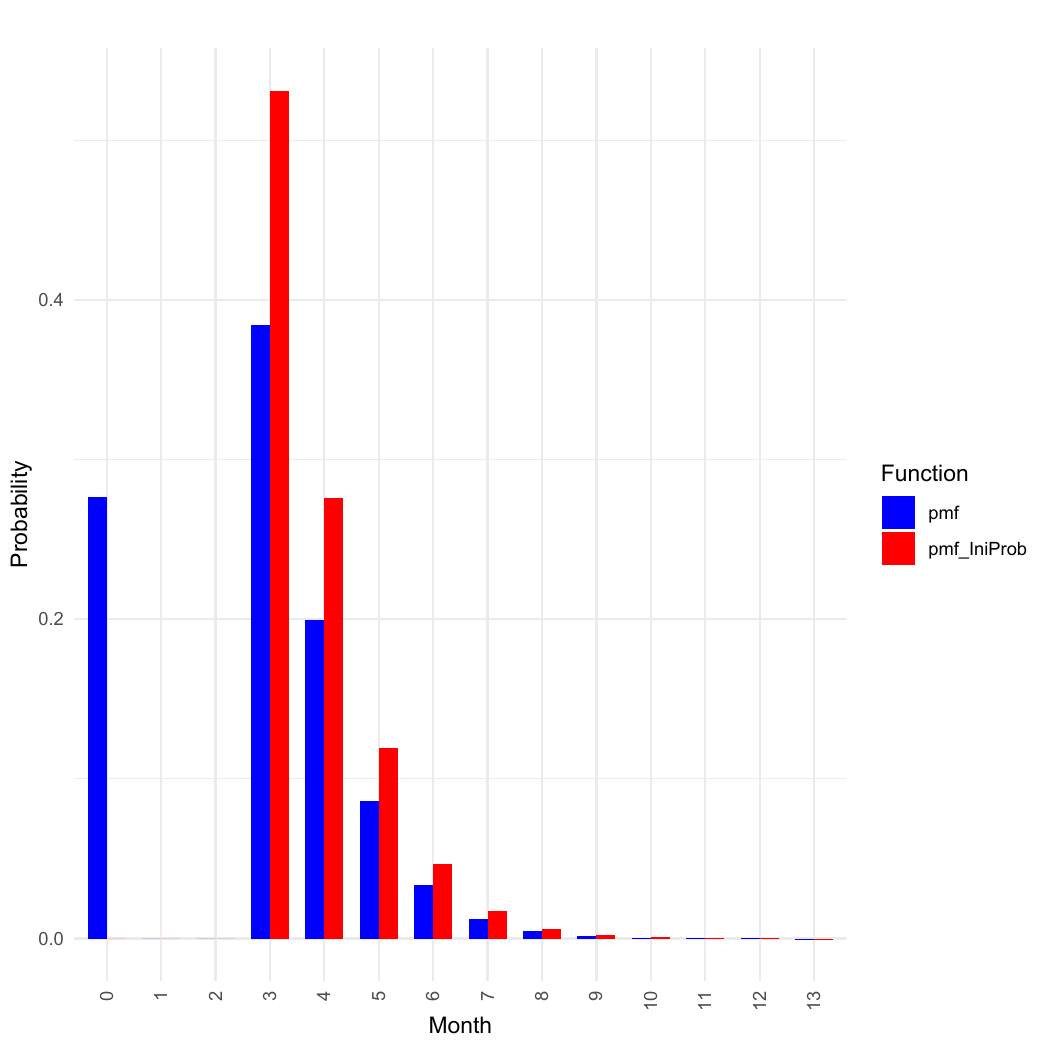}
       \caption{Probabilities.}
    \label{fig:eje22}
    \end{subfigure}
    \caption{Supply chain example.}
    \label{fig:ejem2}
\end{figure}

%********************************************
\section{Conclusions}
\label{ssec:conclu}
Studying or converting a positive system into a PH distribution can provide a range of benefits, including model simplification, modelling and prediction capabilities, intuitive interpretation, performance analysis, and facilitation of interdisciplinary applications. These advantages make this approach valuable in a wide range of research contexts and practical applications.

This work has shed light on the intricate relationship between positive linear systems and PH distributions, both continuous and discrete. Through our analysis, we have demonstrated the utility of PH distributions in effectively capturing the stochastic behavior of system dynamics, providing valuable insights into their probabilistic characteristics. 

Furthermore, our findings underscore the importance of considering the inherent uncertainty in system modelling and analysis, particularly in the context of complex dynamic systems. By leveraging the rich theoretical framework of PH distributions, we have advanced our understanding of system dynamics and paved the way for future research endeavors in this domain. 

Moving forward, integrating these probabilistic models into practical applications holds promise for enhancing the reliability, efficiency, and robustness of various engineering and scientific systems. Overall, our study contributes to the broader discourse on stochastic modelling and analysis, offering new perspectives and avenues for exploring the dynamics of complex systems.

 %Representing positive systems as PH distributions can facilitate the integration of methods and techniques from different disciplines, such as queueing theory, probability theory, and control theory. This can open up new opportunities for applying and adapting tools and approaches developed in other areas.

%******************************
\bibliographystyle{plain}
\bibliography{referencias}
\end{document}